\newcommand{\real}[0]{\mathbb{R}}
\newcommand{\NN}[0]{\mathbb{N}}
\newcommand{\bvec}[1]{{\bf#1}}
\theoremstyle{plain} \newtheorem{thm}{Theorem}
\theoremstyle{plain} \newtheorem*{thm'}{Theorem 7$'$}
\theoremstyle{plain} 
\theoremstyle{definition} \newtheorem*{rmk}{Remark}
\theoremstyle{definition} 
\theoremstyle{plain} \newtheorem{lemma}[thm]{Lemma}
\theoremstyle{plain} \newtheorem*{lem2}{Lemma 2}
\theoremstyle{plain} \newtheorem*{lem3}{Lemma 3}
\theoremstyle{definition} 
\theoremstyle{plain} \newtheorem{cor}[thm]{Corollary}
\begin{document}

\title{Measurement entropy in Generalized Non-Signalling Theory cannot detect bipartite non-locality}

\author{Josh Cadney}
\email{josh.cadney@bristol.ac.uk}
\affiliation{Department of Mathematics, University of Bristol, Bristol BS8 1TW, U.K.}

\author{Noah Linden}
\affiliation{Department of Mathematics, University of Bristol, Bristol BS8 1TW, U.K.}

\date{July 13, 2012}

\begin{abstract}
    We consider entropy in Generalized Non-Signalling Theory (also known as box world) where the most common definition of entropy is the measurement entropy. In this setting, we completely characterize the set of allowed entropies for a bipartite state. We find that the only inequalities amongst these entropies are subadditivity and non-negativity. What is surprising is that non-locality does not play a role - in fact any bipartite entropy vector can be achieved by separable states of the theory. This is in stark contrast to the case of the von Neumann entropy in quantum theory, where only entangled states satisfy $S(AB)<S(A)$.
\end{abstract}

\maketitle

\section{Introduction}

Entropy is a crucial concept in both classical and quantum information theory. The Shannon entropy was originally introduced as a measure of the uncertainty of a random variable \cite{Sha48}, which turned out to have many applications in information theory, including optimal compression rates and channel capacities. Remarkably, the von Neumann entropy was introduced 20 years before the Shannon entropy, and was motivated by thermodynamic considerations \cite{vN27}. It has found innumerable applications in quantum information theory, including its role as a measure of pure state entanglement \cite{BBPS96,PR97}, and as the analogue of the Shannon entropy in many quantum coding theorems \cite{Wilde}.

Given a multi-party quantum state $\rho$ one can compute the von Neumann entropy of its various reduced states e.g. $S(A):= S(\rho_A),\ S(AB):= S(\rho_{AB})$ etc., and so form the \emph{entropy vector} of this state $\rho$. So for example, for two-party states, the entropy vector is $(S(A),S(B),S(AB))$.  For $N$ parties, the entropy vector lives in the vector space of $2^N-1$ real dimensions.  The question of which vectors can arise has been the subject of increasing interest recently, both in the quantum (von Neumann entropy) \cite{Pip03,LW05,CLW12} and classical (Shannon entropy) \cite{YZ98,Mat07,DFZ11} cases.

For example for two parties, both quantum entropies $S_Q$ and classical entropies $S_C$ are non-negative and satisfy \emph{subadditivity}
\begin{eqnarray}
S_Q(AB)&\leq& S_Q(A) + S_Q(B),\nonumber\\
S_C(AB)&\leq& S_C(A) + S_C(B),
\end{eqnarray}
However the space of achievable entropy vectors is different for the classical and quantum cases.
Whereas classical entropies $S_C$ satisfy \emph{monotonicity}
\begin{equation}
S_C(AB)\ge S_C(A) ,\label{monotonicity}
\end{equation}
quantum entropies are more general and only satisfy the weaker Araki-Lieb inequality \cite{AL70}
\begin{equation}
S_Q(AB)\ge S_Q(A)-S_Q(B) .
\end{equation}
In particular, a vector such as $(1,1,0)$ is achievable as a quantum entropy vector; it is the entropy vector of a singlet.  However this vector is not achievable for any classical distribution; it does not satisfy (\ref{monotonicity}).  Thus the space of entropy vectors seems to capture some of the differences between classical and quantum states.  Indeed study of the space of achievable entropy vectors is a powerful tool in investigating multi-party entanglement of quantum states.

Mathematically, the space of entropy vectors is a cone \cite{Pip03}, and characterizing this cone is an important problem in classical and quantum information theory \cite{Yeung}.  The problem is completely solved for three or fewer parties in the classical and quantum cases \cite{YZ97,Pip03} (leading to different cones, of course); the cases of four or more parties (classical or quantum) remain open.  One may understand a cone either by giving the inequalities or, dually, by the extremal rays.  And, perhaps not surprisingly, points on these extremal rays typically correspond to interesting states.  For example for quantum entropy vectors of two parties the extremal rays include $\lambda(1,1,0),\ \lambda\ge 0$; a point on this ray may be achieved by the singlet, as mentioned above.  For three parties one of the extremal quantum rays may be achieved by the GHZ state (see also \cite{LW05,Ibi07}).

With these observations in mind we turn now to so called \lq\lq generalized probabilistic theories\rq\rq\ (GPTs) \cite{Bar07}. These are theories which generalize classical and quantum theories, beginning from an operational viewpoint, where states are characterized by the output distributions of certain measurements. One aim of this field of research is to compare these more general theories with quantum theory, and in doing so gain some intuition as to `why' Nature seems to prefer quantum theory.

Attempts have been made to introduce an entropy function within these general theories. The most popular seems to be the \emph{measurement entropy} which satisfies many desirable properties for an entropy function \cite{SW10}: it reduces to the Shannon and von Neumann entropies in classical and quantum theories respectively; it is always non-negative; and it is concave. In certain (quite broad) classes of theories, it is also subadditive and continuous.

Here we investigate features of the measurement entropy in `generalized non-signaling theory' (GNST) \cite{Bar07} (also known as \emph{box world} \cite{SB10}) - the most famous and well studied GPT, which allows all non-local correlations that are non-signalling. Our first goal is to characterize the set of \emph{entropy vectors}. It has already been noted that this entropy violates \emph{strong subadditivity} and so the allowed entropy vectors are in some sense more general than the corresponding classical and quantum ones \cite{SW10}. Our initial thought was that the space of achievable entropy vectors in GNST would reflect and shed light on the way this theory generalizes classical and quantum states.

We are able to completely determine the set of bipartite GNST entropy vectors (up to the closure). We find this set to be the cone in $\real^3$ cut out by the non-negativity and subadditivity of the entropy \emph{and no other inequalities}. This is in contrast with classical probability and quantum theory, where the analogous set is smaller due to the monotonicity (classical) and Araki-Lieb (quantum) inequalities.

What is very surprising, however, is that \emph{every} entropy vector in GNST can be achieved by a separable state. This means that the measurement entropy is unable to detect non-locality. This is not true in quantum theory, where all separable states (but certainly not all states) satisfy the monotonicity relation $S(A)\leq S(AB)$ \cite{HHH96}; thus one may say that those quantum entropy vectors that do not satisfy monotonicity are the \lq\lq truly\rq\rq\ quantum ones.

The structure of the paper is as follows: in section 2 we briefly review GPTs; in section 3 we consider in some detail the allowed measurements in GNST; in section 4 we characterize the set of bipartite GNST entropy vectors; and in section 5 we consider the implications of this result. We close the paper with some concluding remarks.

\section{Generalized Probabilistic Theories} \label{GPT}
It has been well known since Bell's theorem \cite{Bel64} that quantum theory admits correlations which are incompatible with any local classical theory. However, there are also correlations compatible with the no-signalling principle which cannot be produced by quantum theory. The most famous example is the PR-box \cite{PR94}.

\begin{center} \includegraphics[height=1.5in]{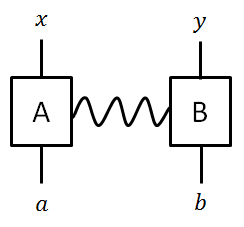} \end{center}
Here, two parties (Alice and Bob) each own part of a bipartite system. Alice and Bob choose inputs $x,y\in\{0,1\}$ respectively; the values of $x$ and $y$ correspond to different measurements on their systems. They then obtain outcomes $a,b\in\{0,1\}$ according to the distribution:
\begin{equation}
    p(a,b|x,y)=\left\{\begin{array}{cl}
        \frac12 & \text{ if } a\oplus b=xy \\
        0 & \text{ else}
    \end{array}\right.
\end{equation}
We know that quantum theory cannot produce such a distribution \cite{Tsi80}. The motivation behind GPTs is to consider what kinds of physical theory could admit these and other general no-signalling correlations.

In order to form a new physical theory, we assume that the state of a system is determined by the outcome distributions of measurements on the system.

For an individual system, we assume that there is a set of $k$ measurements, each with $l$ outcomes, which determine the state uniquely. We call these the \emph{fiducial} measurements. The state of the system is then a vector
\begin{equation}
    \bvec{p}=\left(\begin{array}{c}
        p(a=0|x=0) \\
        p(a=1|x=0) \\
        \vdots \\
        p(a=l-2|x=k-1) \\
        p(a=l-1|x=k-1) \\
    \end{array}\right)
\end{equation}
in a real vector space $V$. The values of $k$ and $l$ can vary from system to system (in the same way that different quantum systems have Hilbert spaces with different dimensions). For example, when $k=1$ there is only one fiducial measurement, and we say that the system is \emph{classical} since it is simply a classical random variable.

For a composite system, we make the further assumption that the fiducial measurements are those performed by simultaneously performing a fiducial measurement on each individual subsystem. This means that if the state is composed of $n$ individual systems, then the state can be considered as a vector, $\bvec{p}$, with components $p(a_1,\ldots, a_n|x_1,\ldots, x_n)$, also denoted $p(\bvec{a}|\bvec{x})$. Notice that $\bvec{p}$ naturally lives in the vector space $V_1 \otimes \ldots \otimes V_n$, where $V_i$ is the vector space containing the states of system $i$.

We now have many different types of system (each system contains some number, $n$, of individual systems, each of which has its own values for $k$ and $l$). We obtain a physical theory by specifying the sets of allowed states on each type of system. These sets must be convex to allow state mixing, and must satisfy the normalization condition:
\begin{equation}
    \sum_{\bvec{a}} p(\bvec{a}|\bvec{x}) = 1 \ \forall \bvec{x}
\end{equation}
Further, all states must satisfy the no-signalling constraints:
\begin{equation}
    \sum_{a_i} p(a_1,\ldots,a_i,\ldots,a_n|x_1,\ldots,x_i,\ldots,x_n) = \sum_{a_i} p(a_1,\ldots,a_i,\ldots,a_n|x_1,\ldots,x_i',\ldots,x_n)
\end{equation}
These constraints are important for two reasons: firstly, any state which violates these constraints would allow superluminal signalling. Secondly, they allow us to define the reduced state of a multipartite system:
\begin{equation}
    p(\hat{\bvec{a}}|\hat{\bvec{x}}) := \sum_{a_i} p(\bvec{a}|\bvec{x})
\end{equation}
where e.g. $\hat{\bvec{a}}=(a_1,\ldots,a_{i-1},a_{i+1},\ldots,a_n)$, and we know that the sum does not depend on the value of $x_i$.

These theories can then be extended to include general measurements (this will be discussed in more detail in section \ref{GNST}) and transformations, where we make further, physically motivated assumptions. For a full discussion see \cite{Bar07}.

\section{Generalized Non-Signalling Theory} \label{GNST}
In this paper we consider a particular GPT known as `generalized non-signalling theory' (GNST). GNST is the most general GPT in the sense that, for any type of system, the set of allowed states is \emph{all} those which satisfy no-signalling. It is also known as `box world' since we refer to individual systems in GNST as \emph{boxes}. In this section we are especially interested in the measurements which the theory permits.

\subsection*{Measurements in GNST}
Suppose we have a system with a set of allowed states $\mathscr{S}$. In a generalized probabilistic theory, an arbitrary measurement on the system (including, but not limited to, the fiducial measurements) has the following form: it is a set of pairs $(r,\mu_r)$, where $r$ is the outcome of the measurement, and $\mu_r$ is the corresponding \emph{effect}. An effect is a linear map $\mu :\mathscr{S}\to \left[ 0,1 \right] $ (so that $\mu_r(\bvec{p})$ is the probability that outcome $r$ is obtained when the measurement is performed on state $\bvec{p}$). To ensure that these probabilities always sum to 1, every measurement must have that $\sum_r \mu_r = u$, where $u$ is the constant map $u(\bvec{p})=1 \ \forall \bvec{p}\in\mathscr{S}$.

In GNST, any linear function $\mu : \mathscr{S}\to\left[ 0,1 \right]$ is an allowed effect, and any set of effects $\left\{ \mu_r \right\}$ which sum to the unit map is an allowed measurement. Here we review what is known about the set of measurements in GNST, and prove a slight generalization of a result in \cite{SB10} which we will use in Section \ref{result}.

Since effects are linear functionals, they must be of the form:
\begin{equation}
    \mu_r(p)=\sum_{\bvec{a},\bvec{x}} p(\bvec{a}|\bvec{x}) R_r(\bvec{a}|\bvec{x})
\end{equation}
for some vector $\bvec{R}_r$ (with entries indexed over $\bvec{a}$ and $\bvec{x}$). We say that $\bvec{R}_r$ represents $\mu_r$. However, note that there will be many vectors which represent each $\mu_r$.

The following lemma is crucial.
\begin{lemma}[{Barrett \cite[Appendix D]{Bar07}}] \label{effect}
    Every effect $\mu$ can be represented by some vector $\bvec{R}$ such that $0\leq R(\bvec{a}| \bvec{x}) \leq 1$ for all $\bvec{a},\bvec{x}$.
\end{lemma}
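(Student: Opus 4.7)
The plan is to use the fact that two vectors $\bvec{R}_1, \bvec{R}_2$ represent the same effect on $\mathscr{S}$ if and only if their difference annihilates every state, i.e.\ lies in $\text{span}(\mathscr{S})^\perp$. In GNST this annihilator is generated by ``null'' vectors coming from normalization and from no-signalling, so one has substantial freedom to reshape the entries of any initial representation without changing the effect. The strategy is to start from an arbitrary $\bvec{R}$ representing $\mu$ and produce $\bvec{R}'$ with entries in $[0,1]$ by adding a suitable null vector.

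I would first handle the single-party case, because it isolates the main idea. Here the null vectors are exactly the column-wise shifts $R(a|x)\mapsto R(a|x)+c_x$ with $\sum_x c_x=0$. Write $m_x:=\min_a R(a|x)$ and $M_x:=\max_a R(a|x)$. Evaluating $\mu$ on the deterministic state that outputs $a^*_x$ under measurement $x$ (which lies in $\mathscr{S}$) gives $\sum_x R(a^*_x|x)\in[0,1]$ for any choice of outcomes, and choosing $a^*_x$ to realise $m_x$ or $M_x$ yields $\sum_x m_x\ge 0$, $\sum_x M_x\le 1$, and the ``mixed'' inequality $M_{x_0}+\sum_{x\neq x_0} m_x\le 1$ for each $x_0$ (hence $M_{x_0}-m_{x_0}\le 1-\sum_x m_x$). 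Setting $\bar m:=(\sum_x m_x)/k$ and $R'(a|x):=R(a|x)-m_x+\bar m$ then gives a null shift (the row shifts $-m_x+\bar m$ sum to zero), and the inequalities above yield $\bar m\ge 0$ together with $M_x-m_x+\bar m\le 1-(1-1/k)\sum_x m_x\le 1$, so $R'(a|x)\in[0,1]$ as required.

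For a general $n$-party system, the same strategy applies but with a richer family of null vectors. No-signalling provides additional freedom to add shifts supported on one party's settings that vanish in a marginal sense, and the extremal states needed to derive the bounds are the local deterministic product strategies $f(\bvec{x})=(f_1(x_1),\ldots,f_n(x_n))\in\mathscr{S}$, which give $\sum_{\bvec{x}} R(f(\bvec{x})|\bvec{x})\in[0,1]$. I would proceed by induction on $n$, shifting one party at a time using the single-party construction and absorbing each correction with the no-signalling null vectors. The main obstacle is verifying that these successive shifts can be organised into one coherent global null vector rather than conflicting with one another across different values of the other parties' indices; this is precisely where the no-signalling structure earns its keep, since it supplies exactly the degrees of freedom needed to decouple the rows. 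A safer fallback would be to recast the whole construction as a linear-programming feasibility problem and appeal to Farkas' lemma: infeasibility would yield a non-signalling state on which $\mu$ takes a value outside $[0,1]$, contradicting the hypothesis that $\mu$ is an effect.
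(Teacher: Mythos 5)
A contextual note first: the paper gives no proof of this lemma at all --- it is imported from Barrett's Appendix D --- so there is no in-paper argument to compare against, and your attempt has to stand on its own.

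Your single-party argument is correct and complete: the deterministic states give $\sum_x m_x\geq 0$ and $M_{x_0}+\sum_{x\neq x_0}m_x\leq 1$, the column shift $R(a|x)\mapsto R(a|x)-m_x+\bar m$ is a null vector, and your bounds do place the new entries in $[0,1]$. The gap is the multi-party case, which is what the lemma actually asserts ($\bvec{a},\bvec{x}$ are tuples over all boxes) and what the paper uses --- e.g.\ in the proof of Lemma \ref{twononclass} the representing vectors $R_r(\bvec{a},\bvec{b}|\bvec{x})$ of a measurement on an $(n+2)$-box system must have entries in $[0,1]$. For several boxes the global deterministic states $p(\bvec{a}|\bvec{x})=\delta_{\bvec{a},f(\bvec{x})}$ are signalling unless $f$ is a local strategy, so your key inequalities are only available for product strategies, and the simple column shifts no longer suffice. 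You acknowledge this, but the assertion that the no-signalling null vectors ``supply exactly the degrees of freedom needed'' to organise the party-by-party shifts into one coherent global null vector is precisely the content of the lemma, and it is not verified --- the bookkeeping across the other parties' inputs and outputs is where the difficulty lives.

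The Farkas fallback does not close this gap either. Infeasibility of the box-constrained affine problem yields a separating functional $\bvec{w}$ that is forced (by boundedness on the affine subspace $\{\bvec{R}+\bvec{d}:\bvec{d}\in\mathrm{span}(\mathscr{S})^{\perp}\}$) to lie in the \emph{span} of the non-signalling states, with $\bvec{w}\cdot\bvec{R}$ exceeding $\sum_i\max(w_i,0)$. But an element of the span need not be proportional to a state: any decomposition $\bvec{w}=\sum_j\alpha_j\bvec{p}_j$ may have negative coefficients, and then the hypothesis $\mu(\bvec{p}_j)\in[0,1]$ gives no immediate contradiction. So the step ``infeasibility would yield a non-signalling state on which $\mu$ takes a value outside $[0,1]$'' is an unjustified leap; converting the dual certificate into an allowed state (or otherwise exploiting the structure of the no-signalling polytope) is exactly the nontrivial work that Barrett's Appendix D performs and that is missing here.
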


Now suppose we have a composite system of many boxes. One way in which we can perform a measurement is to do the following:
\begin{itemize}
    \item Choose one of the individual boxes and perform a fiducial measurement on that box.
    \item Based on the outcome of this measurement, choose another box and perform a fiducial measurement on that box.
    \item Repeat until all the boxes have been measured.
    \item Give the outcome of the measurement, $r$, which is a deterministic function of the outputs $\bvec{a}$.
\end{itemize}
We call a measurement which has this form a \emph{basic} measurement. The outcomes of a basic measurement are the values $r(\bvec{a})$. The probability of obtaining output $\hat{r}$ is equal to $\sum_\bvec{a} p(\bvec{a}|\bvec{x}(\bvec{a}))$ where the sum runs over all outputs $\bvec{a}$ such that $r(\bvec{a})=\hat{r}$. Here $\bvec{x}(\bvec{a})$ is the list of inputs which are entered in the measurement when outputs $\bvec{a}$ are obtained. Thus, a vector representing the effect $\mu_{\hat{r}}$ is $\bvec{R}_{\hat{r}}$ with components,
\begin{equation}
    R_{\hat{r}} (\bvec{a}|\bvec{x})=\left\{\begin{array}{cl}
        1 & \text{ if } r(\bvec{a})=\hat{r} \text{ and } \bvec{x}=\bvec{x}(\bvec{a}) \\
        0 & \text{ else}
    \end{array}\right.
\end{equation}

In \cite{SB10} it is shown that the only measurements which can be performed on systems of one or two boxes are basic measurements, or probabilistic mixtures of basic measurements. In section \ref{result} we will need the following generalization:

\begin{lemma} \label{twononclass}
    Let $A_1,A_2,B_1,\ldots,B_n$ be a system of boxes, where only $A_1$ and $A_2$ are not classical. Then all measurements on this system are basic, or mixtures of basic measurements.
\end{lemma}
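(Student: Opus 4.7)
My plan is to reduce Lemma \ref{twononclass} to the two-box case from \cite{SB10} by exploiting the triviality of the classical boxes. Since each $B_i$ admits only one fiducial measurement, the no-signalling conditions imply that every state on the composite factors as
\[ p(a_1,a_2,\bvec{b}|x_1,x_2) = q(\bvec{b})\, p^{\bvec{b}}(a_1,a_2|x_1,x_2), \]
where $q$ is a classical marginal on the $B_i$'s and $p^{\bvec{b}}$ is a bona fide no-signalling state on $A_1A_2$ for every $\bvec{b}$ with $q(\bvec{b})>0$. I therefore expect any effect on the composite system to restrict, slice by slice in $\bvec{b}$, to an effect on $A_1A_2$, after which the two-box result can be applied on each slice separately.

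Given a measurement $\{(r,\mu_r)\}$, I would first use Lemma \ref{effect} to pick representatives $\bvec{R}_r$ with entries in $[0,1]$, and for each outcome tuple $\bvec{b}$ of the classical boxes define the slice
\[ R_r^{\bvec{b}}(a_1,a_2|x_1,x_2) := R_r(a_1,a_2,\bvec{b}|x_1,x_2), \]
which represents some effect $\nu_r^{\bvec{b}}$ on $A_1A_2$. The first substantive step is to show that $\{(r,\nu_r^{\bvec{b}})\}_r$ really is a measurement on $A_1A_2$, i.e.\ that $\sum_r \nu_r^{\bvec{b}}$ is the unit. The trick here is to feed $u = \sum_r \mu_r$ the product state $p^{A_1A_2}(a_1,a_2|x_1,x_2)\,\delta_{\bvec{b},\bvec{b}_0}$, which one checks is itself a valid no-signalling state on the whole system; since $u$ returns $1$ on every state, the resulting identity must hold for all $p^{A_1A_2}$, forcing $\sum_r R_r^{\bvec{b}_0}$ to represent the unit on $A_1A_2$.

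A direct computation using the factorisation above then gives
\[ \mu_r(\bvec{p}) = \sum_{\bvec{b}} q(\bvec{b})\, \nu_r^{\bvec{b}}(p^{\bvec{b}}), \]
so the original measurement is operationally equivalent to the protocol ``measure each $B_i$ fiducially to obtain $\bvec{b}$, then apply the two-box measurement $\{(r,\nu_r^{\bvec{b}})\}$ to $A_1A_2$''. Invoking \cite{SB10} expresses each $\{(r,\nu_r^{\bvec{b}})\}$ as a convex mixture of basic measurements on $A_1A_2$; inserting these into the protocol and collecting weights over the classical outcomes yields $\{(r,\mu_r)\}$ as a mixture of basic measurements on the full system, because measuring the (classical) $B_i$'s first and then running an adaptive basic measurement on $A_1A_2$ is, by definition, a basic measurement on $A_1A_2B_1\cdots B_n$. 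The main obstacle I anticipate is the normalization step --- verifying that the $\bvec{b}$-slices of the $[0,1]$-representation actually assemble into valid measurements on $A_1A_2$ --- since the representation $\bvec{R}_r$ is not unique and the argument depends essentially on having chosen one with the right positivity. After that, the remaining work is bookkeeping.
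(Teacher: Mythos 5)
Your proposal is correct and follows essentially the same route as the paper's own proof: slice the $[0,1]$-representatives $\bvec{R}_r$ by the classical outcomes $\bvec{b}$, verify each slice forms a measurement on $A_1A_2$ by evaluating on states of the form $p(\bvec{a}|\bvec{x})\,\delta_{\bvec{b}\bvec{b}_0}$, invoke the two-box result of \cite{SB10}, and observe that ``read the classical boxes, then run the adaptive two-box measurement'' reproduces $M$ and is a mixture of basic measurements. The normalization step you flag as the main obstacle is handled in the paper exactly by the delta-state trick you describe, so no gap remains.
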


The proof of this lemma is straightforward, and provided in the appendix.

\subsection*{Maximally informative measurements}
A notion which will be important in the next section is that of a \emph{maximally informative} or \emph{fine-grained} measurement. Let $M=\{ (r,\mu_r) \}$ and $N=\{ (s,\nu_s) \}$ be two measurements, and denote their sets of possible outcomes by $O_M$ and $O_N$ respectively. We say that $N$ is a \emph{refinement} of $M$ if $O_N$ can be partitioned into sets $P_r$ such that, for each $r$, $\mu_r = \sum_{s\in P_r} \nu_s$. In this case, $N$ can be used to perform $M$ (by performing $N$ and returning $r$ such that the outcome $s$ is in the set $P_r$). The refinement is \emph{trivial} if $\nu_s \propto \mu_r$ whenever $s \in P_r$. If $M$ has no non-trivial refinement, then no other measurement reveals strictly more information about the state, and hence we call $M$ maximally informative, or fine-grained.

The following lemma gives an important characterization of maximallly informative measurements in GNST. Although the result may seem obvious in view of lemma \ref{effect}, the proof requires a little effort and can be found in the appendix.
\begin{lemma} \label{fine-grained}
    A GNST measurement $M=\{(r,\mu_r)\}$ is maximally informative if and only if every effect $\mu_r$ can be represented by a vector with only one non-zero entry, which is between 0 and 1.
\end{lemma}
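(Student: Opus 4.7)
The plan is to prove the two implications separately. The ``if'' direction---single-entry representations imply maximal informativeness---will rely on a codimension-one face argument on the no-signalling state space $\mathscr{S}$. The ``only if'' direction will proceed by explicitly splitting any multi-entry representation into two valid effects and invoking the maximality hypothesis.

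For the ``if'' direction, suppose every $\mu_r$ is represented by a vector $\bvec{R}_r$ with a single non-zero entry $c_r$ at $(\bvec{a}_r,\bvec{x}_r)$, so $\mu_r(\bvec{p})=c_r\,p(\bvec{a}_r|\bvec{x}_r)$. Let $\{\nu_s\}_{s\in P_r}$ refine $\mu_r$. Any $\bvec{p}\in\mathscr{S}$ with $p(\bvec{a}_r|\bvec{x}_r)=0$ forces $\mu_r(\bvec{p})=0$, and then each $\nu_s(\bvec{p})=0$ by the non-negativity of effects. Thus every $\nu_s$ vanishes on the face $\mathscr{S}_0=\{\bvec{p}\in\mathscr{S}:p(\bvec{a}_r|\bvec{x}_r)=0\}$. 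Assuming $\mathscr{S}_0$ is codimension-one in $\mathscr{S}$, the standard fact that affine functions on $\mathrm{aff}(\mathscr{S})$ vanishing on a codimension-one affine subspace are scalar multiples of its defining functional yields $\nu_s(\bvec{p})=\lambda_s\,p(\bvec{a}_r|\bvec{x}_r)\propto\mu_r(\bvec{p})$, so the refinement is trivial.

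For the ``only if'' direction, assume $M$ is maximally informative and fix $\mu_r\in M$. By Lemma~\ref{effect}, choose a representation $\bvec{R}$ of $\mu_r$ with entries in $[0,1]$. If $\bvec{R}$ has two or more non-zero entries, pick one---say $c_1=R(\bvec{a}_1|\bvec{x}_1)>0$---and split $\bvec{R}=\bvec{R}^{(1)}+\bvec{R}^{(2)}$, where $\bvec{R}^{(1)}$ has value $c_1$ at $(\bvec{a}_1,\bvec{x}_1)$ and zero elsewhere. Each component has entries in $[0,1]$, so the induced effects $\mu^{(1)}(\bvec{p})=c_1\,p(\bvec{a}_1|\bvec{x}_1)$ and $\mu^{(2)}=\mu_r-\mu^{(1)}$ are non-negative (entrywise-non-negative representations applied to non-negative states) and bounded above by $\mu_r\le 1$, hence valid effects. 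Substituting $(r_1,\mu^{(1)}),(r_2,\mu^{(2)})$ for $(r,\mu_r)$ in $M$ yields a refinement, which must be trivial, forcing $\mu^{(1)}=\alpha\mu_r$ for some $\alpha\in(0,1]$. Rearranging gives $\mu_r(\bvec{p})=(c_1/\alpha)\,p(\bvec{a}_1|\bvec{x}_1)$, a single-non-zero-entry representation whose coefficient lies in $[0,1]$ by evaluating on a deterministic state satisfying $p(\bvec{a}_1|\bvec{x}_1)=1$ and using $\mu_r\le 1$.

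The main technical hurdle is the codimension-one claim: one must check that $p(\bvec{a}_r|\bvec{x}_r)=0$ is not implied by the normalization and no-signalling equations defining $\mathrm{aff}(\mathscr{S})$. I would settle this by exhibiting two deterministic product states---one where each local function $f_i$ satisfies $f_i(x_{r,i})=a_{r,i}$ (giving $p(\bvec{a}_r|\bvec{x}_r)=1$) and one in which some $f_i(x_{r,i})\ne a_{r,i}$ (giving $p(\bvec{a}_r|\bvec{x}_r)=0$). Both lie in $\mathscr{S}$, since deterministic product distributions are automatically no-signalling, confirming that the extra equation is genuinely independent and so $\mathscr{S}_0$ has codimension exactly one in $\mathscr{S}$.
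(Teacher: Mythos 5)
Your ``only if'' direction is fine: splitting a $[0,1]$ representation $\bvec{R}$ at one non-zero entry, checking that both pieces are valid effects, and using triviality of the induced refinement to extract a single-entry representation of $\mu_r$ is correct, and it even avoids the uniqueness statement (Lemma \ref{RequalsS}) that the paper uses for its contrapositive argument. The gap is in your ``if'' direction. To conclude $\nu_s\propto\mu_r$ you need that every affine function on $\mathrm{aff}(\mathscr{S})$ vanishing on $\mathscr{S}_0=\{\bvec{p}\in\mathscr{S}:p(\bvec{a}_r|\bvec{x}_r)=0\}$ is a multiple of $\bvec{p}\mapsto p(\bvec{a}_r|\bvec{x}_r)$, and for that you must know that $\mathscr{S}_0$ \emph{affinely spans} the slice $\mathrm{aff}(\mathscr{S})\cap\{p(\bvec{a}_r|\bvec{x}_r)=0\}$, i.e.\ that this positivity constraint cuts out a facet of the no-signalling polytope. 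What you actually verify --- one deterministic state with $p(\bvec{a}_r|\bvec{x}_r)=1$ and one with $p(\bvec{a}_r|\bvec{x}_r)=0$ --- only shows that the functional is non-constant on $\mathscr{S}$, so the slice is a proper hyperplane of $\mathrm{aff}(\mathscr{S})$ and $\mathscr{S}_0$ is non-empty; it does not show $\dim\mathrm{aff}(\mathscr{S}_0)=\dim\mathrm{aff}(\mathscr{S})-1$. Without that, $\nu_s$ could a priori vanish on a lower-dimensional face and fail to be proportional to $p(\bvec{a}_r|\bvec{x}_r)$ (compare: many inequivalent affine functions vanish at a vertex of a simplex). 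So ``the equation is independent of the no-signalling and normalization equations'' is a strictly weaker statement than the one your ``standard fact'' needs.

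The facet property is in fact true when every box has at least two outputs (the degenerate single-output case is treated separately in the paper's closing Remark), but establishing it requires exhibiting $\dim\mathrm{aff}(\mathscr{S})$ affinely independent states in $\mathscr{S}_0$, not two points --- and that is the real work your sketch elides. The paper circumvents precisely this by a cheaper, coordinatewise device: by Lemma \ref{effect} each $\nu_s$ has an entrywise non-negative representing vector, and Lemma \ref{useful} supplies, for every coordinate $(\bvec{a_2},\bvec{x_2})\neq(\bvec{a}_r,\bvec{x}_r)$, a single state in $\mathscr{S}_0$ with $p(\bvec{a_2}|\bvec{x_2})>0$; evaluating $\nu_s=0$ on that state forces the corresponding entry to vanish, so the representation is supported on $(\bvec{a}_r,\bvec{x}_r)$ alone and $\nu_s\propto\mu_r$ (this is what Lemma \ref{RequalsS} packages). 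To repair your route, either prove the facet/spanning claim directly (e.g.\ by constructing enough affinely independent deterministic states with $p(\bvec{a}_r|\bvec{x}_r)=0$), or first pass to non-negative representations of the $\nu_s$ and run the coordinatewise argument, at which point your face-geometry step reduces to the paper's Lemmas \ref{useful} and \ref{RequalsS}.
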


\begin{rmk}
    Lemma \ref{fine-grained} (together with lemma \ref{RequalsS}, found in the Appendix) ensures that basic measurements are maximally informative, if and only if the function $r$ is injective.
\end{rmk}

Suppose we have a composite system of subsystems $X$ and $Y$ (which may themselves be composite systems) and measurements $M_X$ and $M_Y$ on each system. One way to perform a measurement on the composite system would be to perform $M_X$ on $X$ and $M_Y$ on $Y$ independently. This is always an allowed measurement and we denote it $M_X \otimes M_Y$. Precisely, if $M_X$ has effects $\mu_r$ represented by vectors $\bvec{R}_r$ and $M_Y$ has effects $\nu_s$ represented by vectors $\bvec{R}_s$, then $M_X \otimes M_Y$ has effects $\tau_{r,s}$ which can be represented by vectors $\bvec{R}_r \otimes \bvec{R}_s$. (The latter $\otimes$ here is a genuine tensor product of vectors, as mentioned in section \ref{GPT}).

\begin{cor}\label{productmeas}
    If $M_X$ is a maximally informative measurement on a box $X$ and $M_Y$ is maximally informative on $Y$, then $M_X \otimes M_Y$ is a maximally informative measurement on the composite system $XY$.
\end{cor}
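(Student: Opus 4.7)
The plan is to reduce the statement directly to the characterization of fine-grained measurements provided by Lemma \ref{fine-grained}. By that lemma, it suffices to show that every effect $\tau_{r,s}$ of $M_X\otimes M_Y$ admits a representing vector whose entries are all zero except for a single entry in $[0,1]$.

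First I would invoke Lemma \ref{fine-grained} on $M_X$ and $M_Y$ separately to obtain, for each $r$ and each $s$, representing vectors $\bvec{R}_r$ and $\bvec{R}_s$ each having exactly one non-zero entry, say $\alpha_r \in [0,1]$ at position $(\bvec{a}_r\mid\bvec{x}_r)$ and $\beta_s \in [0,1]$ at position $(\bvec{a}_s\mid\bvec{x}_s)$, respectively. Next I would use the construction of the product measurement recalled just before the corollary: the effect $\tau_{r,s}$ of $M_X \otimes M_Y$ is represented by the tensor product $\bvec{R}_r \otimes \bvec{R}_s$, a vector indexed by joint outcomes and inputs $((\bvec{a}_r,\bvec{a}_s)\mid(\bvec{x}_r,\bvec{x}_s))$ whose $((\bvec{a},\bvec{a}')\mid(\bvec{x},\bvec{x}'))$-component is the product $R_r(\bvec{a}\mid\bvec{x})\cdot R_s(\bvec{a}'\mid\bvec{x}')$.

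The key observation is then purely algebraic: a product of two real numbers is non-zero only when both factors are, so $\bvec{R}_r \otimes \bvec{R}_s$ has exactly one non-zero entry, namely $\alpha_r\beta_s$, located at the joint position $((\bvec{a}_r,\bvec{a}_s)\mid(\bvec{x}_r,\bvec{x}_s))$. Since $\alpha_r,\beta_s \in [0,1]$, their product also lies in $[0,1]$. Hence $\bvec{R}_r \otimes \bvec{R}_s$ satisfies the single-non-zero-entry criterion of Lemma \ref{fine-grained}, and this is true for every pair $(r,s)$. Applying the converse direction of Lemma \ref{fine-grained} yields that $M_X \otimes M_Y$ is maximally informative on $XY$.

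There is essentially no obstacle here: the only thing to be careful about is checking that the tensor-product construction of the representing vector really is indexed in the way described, so that the product structure of entries is preserved and no cross terms arise. Given that Lemma \ref{fine-grained} does the heavy lifting, the corollary is a short, essentially one-line consequence once the indexing is made explicit.
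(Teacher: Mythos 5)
Your argument is correct and is essentially the paper's own proof: both apply Lemma \ref{fine-grained} in each direction and observe that the tensor product of two vectors each having a single non-zero entry in $[0,1]$ again has a single non-zero entry in $[0,1]$. You simply make the indexing explicit, which the paper leaves implicit.
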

\begin{proof}
    If $\bvec{R}_X$ and $\bvec{R}_Y$ are vectors with one non-zero entry then so is $\bvec{R}_X \otimes \bvec{R}_Y$. Lemma \ref{fine-grained} then implies the result.
\end{proof}

\section{Entropy in GNST} \label{result}
We are now in a position to introduce the entropy function which we will study in this paper. The measure of entropy we use is the measurement entropy, $\hat{H}$, which is defined as follows. Suppose we have a state $\bvec{p}$ on a system $X$. Then,
\begin{equation} \label{mentropy}
    \hat{H}(X)_\bvec{p} := \inf_{M\in \mathscr{M}} H_M(X)_\bvec{p}
\end{equation}
where $H_M(X)_\bvec{p}$ is the Shannon entropy of the outcomes of measurement $M$ on system $X$ with state $\bvec{p}$, and the infimum is taken over $\mathscr{M}$, the set of all maximally informative measurements. When it is clear which state we are referring to we will omit the subscript $\bvec{p}$ from the notation. The motivation for such a definition of entropy comes from the fact that in classical and quantum theories it is none other than the Shannon and von Neumann entropies respectively. It has previously been studied in \cite{SW10,BBC10}. Lemma \ref{fine-grained} implies that in GNST the infimum in \eqref{mentropy} can be replaced by a minimum.

It is clear from the definition that the measurement entropy is always non-negative, since the Shannon entropy is non-negative. Before we can proceed with the main argument of the paper, we require two simple lemmas.

\begin{lemma}\label{subadd}
    Measurement entropy in GNST is subadditive - for any state of a joint system $XY$ we have that $\hat{H}(XY)\leq \hat{H}(X) + \hat{H}(Y)$.
\end{lemma}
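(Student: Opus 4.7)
The plan is to build a maximally informative measurement on $XY$ out of optimal measurements on $X$ and $Y$ separately, and then invoke classical subadditivity of Shannon entropy.

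First, by Lemma \ref{fine-grained} the infima in the definition of $\hat{H}(X)$ and $\hat{H}(Y)$ are attained, so I can pick maximally informative measurements $M_X = \{(r,\mu_r)\}$ and $M_Y = \{(s,\nu_s)\}$ with $H_{M_X}(X)_{\bvec{p}_X} = \hat{H}(X)$ and $H_{M_Y}(Y)_{\bvec{p}_Y} = \hat{H}(Y)$. By Corollary \ref{productmeas}, the product measurement $M_X \otimes M_Y$ is maximally informative on $XY$, so it is a valid candidate in the minimum defining $\hat{H}(XY)$. In particular
\begin{equation}
\hat{H}(XY) \leq H_{M_X \otimes M_Y}(XY)_\bvec{p}.
\end{equation}

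The next step is to show that the joint output distribution $q(r,s) := \tau_{r,s}(\bvec{p})$ produced by $M_X \otimes M_Y$ has marginals equal to the distributions produced by $M_X$ on $\bvec{p}_X$ and by $M_Y$ on $\bvec{p}_Y$. This is where the no-signalling condition plays its only role: since $\tau_{r,s}$ is represented by $\bvec{R}_r \otimes \bvec{R}_s$ and $\sum_s \nu_s = u$, summing $q(r,s)$ over $s$ collapses the $Y$-part of the effect to the unit map on $Y$, and the resulting number coincides with $\mu_r(\bvec{p}_X)$ because marginalising $\bvec{p}$ over the outputs on $Y$ yields the reduced state $\bvec{p}_X$ (a definition that is well-posed precisely because $\bvec{p}$ is non-signalling). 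The symmetric statement for the $X$-marginal is identical.

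Once the marginals are identified, subadditivity reduces to the subadditivity of the Shannon entropy for the distribution $q(r,s)$ with the above marginals, giving
\begin{equation}
H_{M_X \otimes M_Y}(XY)_\bvec{p} \leq H_{M_X}(X)_{\bvec{p}_X} + H_{M_Y}(Y)_{\bvec{p}_Y} = \hat{H}(X) + \hat{H}(Y),
\end{equation}
and combining with the first display completes the proof.

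The only nontrivial point is the marginalisation identity for product effects acting on non-signalling states; everything else is packaging. I do not anticipate any serious obstacle, since Corollary \ref{productmeas} already guarantees the maximality of $M_X \otimes M_Y$ and the rest is a direct computation with linear effects.
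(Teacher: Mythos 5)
Your proposal is correct and follows essentially the same route as the paper: choose optimal maximally informative measurements $M_X$, $M_Y$, use Corollary \ref{productmeas} to see that $M_X\otimes M_Y$ is a valid candidate in the minimum for $\hat{H}(XY)$, and then apply subadditivity of the Shannon entropy to its output distribution. The only difference is that you spell out the marginalisation identity for product effects on non-signalling states, which the paper leaves implicit when invoking Shannon subadditivity.
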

\begin{proof}
    Suppose that $M_X$ is the measurement on system $X$ which achieves $\hat{H}(X)$, and $M_Y$ is the measurement on system $Y$ achieving $\hat{H}(Y)$. From section \ref{GNST} we know that $M:=M_X \otimes M_Y$ is a maximally informative measurement on system $XY$. Thus,
    \begin{align}
        \hat{H}(XY) &\leq H_M(XY) \\
            &\leq H_{M_X}(X) + H_{M_Y}(Y) \\
            &= \hat{H}(X) + \hat{H}(Y)
    \end{align}
    where the second inequality follows from the subadditivity of the Shannon entropy.
\end{proof}

\begin{rmk}
    Note that this proof applies to any GPT in which the analogue of Corollary \ref{productmeas} holds. This argument was presented in \cite{SW10}.
\end{rmk}

\begin{lemma} \label{product}
    If we restrict $\mathscr{M}$ in \eqref{mentropy} to include only basic measurements, then $\hat{H}$ is additive on product states.
\end{lemma}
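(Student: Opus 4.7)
My plan is to prove the non-trivial inequality $\hat{H}(XY) \geq \hat{H}(X) + \hat{H}(Y)$ on product states; the opposite direction follows from the same argument as Lemma \ref{subadd}, because the tensor-product measurement $M_X \otimes M_Y$ can be realised as a basic measurement (perform the basic $M_X$, then the basic $M_Y$) and is maximally informative by Corollary \ref{productmeas}. I would proceed by induction on the total number of boxes in $XY$, with a trivial base case when one side is empty.

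For the inductive step, fix a basic maximally informative measurement $M$ on $XY$ and a product state $\bvec{p}_X \otimes \bvec{p}_Y$. The root of $M$'s decision tree is a single box; without loss of generality take it to be an $X$-box, call it $X_1$, measured with some setting $s_0$ and producing outcome $A_1$. Applying the Shannon chain rule at the root yields
\begin{equation*}
H_M(XY) = H(A_1) + \sum_a \Pr(A_1 = a)\, H_{M_a}(XY \setminus X_1)_{\bvec{p}_{X|a} \otimes \bvec{p}_Y},
\end{equation*}
where $M_a$ is the subtree of $M$ below the edge for outcome $a$ and $\bvec{p}_{X|a}$ is the conditional state on $X \setminus X_1$, a legitimate GNST state defined by the non-signalling conditional probabilities. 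The key point is that because the initial state was a product, the state on the reduced system $(X\setminus X_1)Y$ remains a product $\bvec{p}_{X|a} \otimes \bvec{p}_Y$; moreover each $M_a$ is itself basic and maximally informative on the smaller system. The inductive hypothesis therefore gives
\begin{equation*}
H_{M_a}(XY\setminus X_1) \geq \hat{H}(X\setminus X_1)_{\bvec{p}_{X|a}} + \hat{H}(Y)_{\bvec{p}_Y}.
\end{equation*}

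Substituting back and pulling the constant $\hat{H}(Y)_{\bvec{p}_Y}$ out of the sum leaves
\begin{equation*}
H_M(XY) \geq H(A_1) + \sum_a \Pr(A_1 = a)\, \hat{H}(X\setminus X_1)_{\bvec{p}_{X|a}} + \hat{H}(Y)_{\bvec{p}_Y}.
\end{equation*}
But the first two terms on the right are exactly the Shannon entropy of the explicit basic measurement on $X$ alone that measures $X_1$ with setting $s_0$ and then, conditional on outcome $a$, performs the basic measurement achieving $\hat{H}(X\setminus X_1)_{\bvec{p}_{X|a}}$ on the remaining $X$-boxes. That measurement is maximally informative by the remark following Lemma \ref{fine-grained}, so the expression is $\geq \hat{H}(X)_{\bvec{p}_X}$; taking the infimum over $M$ closes the induction. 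The only point requiring care is verifying that $\bvec{p}_{X|a}$ is a genuine GNST state and that the adaptive $X$-measurement constructed in the last step is both basic and maximally informative; both are straightforward from the tree structure of basic measurements.
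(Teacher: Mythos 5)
Your proposal is correct and follows essentially the same route as the paper's proof: induction on the total number of boxes, the grouping axiom (chain rule) applied at the first box measured by the basic measurement, preservation of the product structure under conditioning, and reassembly of the remaining terms as an adaptive maximally informative measurement on $X$ alone, with the reverse inequality supplied by the subadditivity argument restricted to basic measurements. The only cosmetic difference is that you bound an arbitrary basic maximally informative measurement and take the infimum at the end, whereas the paper works directly with the optimal measurement and argues its subtrees must themselves be optimal.
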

\begin{proof}
    Let $p(\bvec{a},\bvec{b}|\bvec{x},\bvec{y})=p(\bvec{a}|\bvec{x})p(\bvec{b}|\bvec{y})$ be a product state of a joint system $XY$ (where $X$ is composed of $n$ boxes, and $Y$ is composed of $m$ boxes). We proceed by induction on $n+m$.

    Case $n+m=1$. Wlog $n=1,m=0$. Then $\hat{H}(XY)=\hat{H}(X)=\hat{H}(X)+\hat{H}(Y)$.

    Case $n+m\geq2$. Let $M$ be a measurement which achieves $\hat{H}(XY)$. Since $M$ is a basic measurement it must begin by performing a fiducial measurement on one of the individual boxes. Wlog assume that $M$ begins by performing measurement $x_1$ on box $X_1$. Denote the Shannon entropy of the outcome of this measurement by $H_{x_1}(X_1)$. Now suppose that we have performed the measurement, and the result $a_1$ is known. Let $\bvec{q}_{a_1,x_1}$ be the remaining distribution on $X_2\ldots X_n Y$. Then from the rules of conditional probability:
    \begin{equation}
        q_{a_1,x_1}(a_2,\ldots,a_n,\bvec{b}|x_2,\ldots,x_n,\bvec{y})= \frac{p(\bvec{a},\bvec{b}|\bvec{x},\bvec{y})}{p(a_1|x_1)}
    \end{equation}
    and notice that this is still a product state. Denote the remainder of the measurement, which has not yet been performed, by $M_{a_1}$. This is a basic measurement on $X_2\ldots X_n Y$. Then,
    \begin{align}
        \hat{H}(XY)_\bvec{p}&=H_M(XY)_{\bvec{p}}\\
            &=H_{x_1}(X_1)+ \sum_{a_1} p(a_1|x_1) H_{M_{a_1}}(X_2\ldots X_n Y)_{\bvec{q}_{a_1,x_1}} \label{groupline}
    \end{align}
    where the first line follows from the definition of $M$, and the second line from the grouping axiom of the Shannon entropy\footnote{Suppose that we partition the outcomes of a measurement into groups labelled $a_1,\ldots,a_k$ and break up the measurement into two stages. First observe variable $A$ - which group the outcome is in - and second observe variable $B$ - the outcome from among that group. The grouping axiom states that the entropy of the overall measurement is equal to $H(A)+\sum_i p(a_i)H(B|A=a_i)$.} \cite{Ash}, together with the fact that the outcome of $M$ is an injective function of the outputs $(\bvec{a},\bvec{b})$. From \eqref{groupline} we see that whenever $a_1$ can actually occur (i.e. $p(a_1|x_1)>0$), $M_{a_1}$ must be a measurement which achieves $\hat{H}(X_2\ldots X_nY)_{\bvec{q}_{a_1,x_1}}$, else it would be possible to achieve a lower value for $\hat{H}(XY)_\bvec{p}$. If $a_1$ cannot occur, we may just as well choose $M_{a_1}$ to be such a measurement. Therefore we have,
    \begin{align}
        \hat{H}(XY)_\bvec{p} &= H_{x_1}(X_1) + \sum_{a_1} p(a_1|x_1) \hat{H}(X_2\ldots X_n Y)_{\bvec{q}_{a_1,x_1}}\\
            &= H_{x_1}(X_1) + \sum_{a_1} p(a_1|x_1) \hat{H}(X_2\ldots X_n)_{\bvec{q}_{a_1,x_1}} + \hat{H}(Y)_\bvec{p} \label{grouped}
    \end{align}
    where the second line uses the induction hypothesis, and the fact that the reduced state of $\bvec{p}$ on system $Y$ is the same as that of $\bvec{q}_{a_1,x_1}$. By the grouping axiom, we see that the sum of the first two terms on the right hand side of \eqref{grouped} is the Shannon entropy of the outcomes of a measurement on system $X$. Therefore, by the definition of $\hat{H}$ it follows that
    \begin{equation}
        \hat{H}(XY)_\bvec{p}\geq \hat{H}(X)_\bvec{p} + \hat{H}(Y)_\bvec{p}
    \end{equation}
    Since the proof of lemma \ref{subadd} works also under the restriction to basic measurements, we arrive at the result.
\end{proof}

Our aim is to investigate the set of GNST entropy vectors. Let us focus on the two party case. We know that a two party entropy vector is a vector in $\real^3$, $(x,y,z)$, such that $x,y,z\geq0$ and $z\leq x+y$. Are there any further constraints on the values of $x,y,z$? We will show that in fact these conditions are all.

To this end, let $\mathscr{C}$ be the set of points given by our necessary conditions:
\begin{equation}
    \mathscr{C}:=\{ (x,y,z)\in\real^3 | x,y,z\geq0, z\leq x+y \}.
\end{equation}
Then $\mathscr{C}$ is a closed, convex cone (i.e. $v\in\mathscr{C}$ implies $\lambda v\in\mathscr{C}$ for all $\lambda\in\real_{\geq0}$, and whenever $v_1,v_2\in\mathscr{C}$, $v_1+v_2 \in\mathscr{C}$ also). This means that we can characterize $\mathscr{C}$ either by the linear inequalities which bound $\mathscr{C}$, or equivalently by its extremal rays. These extremal rays are the vectors:
\begin{align}
    e_1&:=(1,0,1)\\
    e_2&:=(0,1,1)\\
    e_3&:=(1,0,0)\\
    e_4&:=(0,1,0).
\end{align}
Any $v\in\mathscr{C}$ can be written as $v=\lambda_1 e_1 + \lambda_2 e_2 + \lambda_3 e_3 + \lambda_4 e_4$, with $\lambda_i \geq 0$ for all $i$.

Consider the following joint probability distribution:
\begin{equation}
    p(a,b)=\left\{\begin{array}{cl}
        \frac12 & \text{ if } b=0 \\
        0 & \text{ else}
    \end{array} \right.
\end{equation}
where $a,b$ both take values in $\{0,1\}$. Then we can consider $p(a,b)$ to be the state of a two box system, in which each box has only one possible input. The entropies are then just the Shannon entropies of the different reduced states. Hence the entropy vector is $(1,0,1)=e_1$. We can similarly find a probability distribution achieving $e_2$, and it is not hard to generalize these to distributions achieving $\lambda e_1$ and $\lambda e_2$ for any $\lambda\geq0$. (Indeed, consider the distribution of two random variables - one of which is deterministic, and the other with Shannon entropy $\lambda$).

Now consider a system of two boxes, $X,Y$, where $X$ has two possible inputs (0 and 1) and $N+1$ outputs ($0,1,\ldots,N$), and $Y$ is a random variable (i.e. only one input) with two possible outputs (0 and 1). The distribution $p(a,b|x)$ is as follows:
    \begin{equation} \label{mainexample0}
        p(a,b|x=0)=\left\{\begin{array}{cl}
            \frac{1}{2} & \text{if } a=b=0\\
            \frac{1}{2N} & \text{if } a\geq1,b=1\\
            0 & \text{else}
        \end{array}\right.
    \end{equation}
    \begin{equation} \label{mainexample1}
        p(a,b|x=1)=\left\{\begin{array}{cl}
            \frac{1}{2} & \text{if } a=0,b=1\\
            \frac{1}{2N} & \text{if } a\geq1,b=0\\
            0 & \text{else}
        \end{array}\right.
    \end{equation}

Since we have only two boxes, we know that the only allowed measurements are the basic measurements (or probabilistic mixtures of basic measurements, but a mixed measurement would not be optimal for achieving the minimum in \eqref{mentropy}). This makes it easy to calculate the entropies. The distribution of $X$ alone, for either input, is $(\frac12,\frac1{2N},\ldots,\frac1{2N})$ and hence $\hat{H}(X)=1+\frac12 \log N$. The reduced distribution of $Y$ is $(\frac12,\frac12)$ and so $\hat{H}(Y)=1$. Now consider the following measurement. First observe $Y$ to obtain output $b$. If $b=0$ set $x=0$, otherwise set $x=1$. Now observe $X$. With certainty we will find that $a=0$, and the distribution of the measurement outcomes is $(\frac12,\frac12)$. This implies that $\hat{H}(XY)\leq 1$, and in fact this measurement is optimal, i.e. $\hat{H}(XY)=1$.

We have discovered that for every $N$, $(1+\frac12\log N,1,1)$ is an entropy vector.

We now alter this scenario by adding an extra character, $\infty$, to the output alphabet of both boxes. Now for each $N\in\NN$ consider the following distribution:
    \begin{equation}
        p(a,b|x=0)=\left\{\begin{array}{cl}
            \frac{\lambda_N}{2} & \text{if } a=b=0\\
            \frac{\lambda_N}{2N} & \text{if } a\geq1,b=1\\
            1-\lambda_N & \text{if } a=b=\infty\\
            0 & \text{else}
        \end{array}\right.
    \end{equation}
    \begin{equation}
        p(a,b|x=1)=\left\{\begin{array}{cl}
            \frac{\lambda_N}{2} & \text{if } a=0,b=1\\
            \frac{\lambda_N}{2N} & \text{if } a\geq1,b=0\\
            1-\lambda_N & \text{if } a=b=\infty\\
            0 & \text{else}
        \end{array}\right.
    \end{equation}
where $\lambda_N$ are (as yet unspecified) constants (between 0 and 1). Fix a positive real value, $k$, and set $\lambda_N = \frac{2k}{\log N}$. By the same reasoning as above, this distribution has entropy vector $(\lambda_N + \frac{\lambda_N}{2}\log N + h(\lambda_N), \lambda_N + h(\lambda_N), \lambda_N + h(\lambda_N)) = (\lambda_N + k + h(\lambda_N), \lambda_N + h(\lambda_N), \lambda_N + h(\lambda_N))$, here $h(q):=-q\log q -(1-q)\log (1-q)$. As $N\to\infty$, $\lambda_N\to 0$ and $h(\lambda_N)\to 0$ and so we have found entropy vectors arbitrarily close to $(k,0,0)$.

\begin{thm} \label{maintheorem}
    Every vector in $\mathscr{C}$ is in the closure of the set of entropy vectors.
\end{thm}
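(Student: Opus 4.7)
The plan is to realize every $(x,y,z) \in \mathscr{C}$ as the limit of entropy vectors of product states that combine the four extremal-ray constructions exhibited above the statement. Since $\mathscr{C}$ is the conic hull of $\{e_1,e_2,e_3,e_4\}$, one can decompose $(x,y,z) = \lambda_1 e_1 + \lambda_2 e_2 + \lambda_3 e_3 + \lambda_4 e_4$ with $\lambda_i \geq 0$ by solving $\lambda_1+\lambda_3=x$, $\lambda_2+\lambda_4=y$, $\lambda_1+\lambda_2=z$ (feasible because $z \leq x+y$ with all coordinates nonnegative, e.g.\ $\lambda_1 = \min(z,x)$). For each $i$, let $\bvec{p}_i^{(N)}$ be a two-box state on a system $X_i Y_i$ whose entropy vector converges to $\lambda_i e_i$ as $N \to \infty$; such sequences are exactly what the explicit constructions preceding the theorem furnish (for $e_1,e_2$ the limit is already achieved; for $e_3,e_4$ one rescales by choosing $\lambda_N = 2\lambda_i/\log N$).

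I would then form the product state $\bvec{p}^{(N)} := \bvec{p}_1^{(N)} \otimes \bvec{p}_2^{(N)} \otimes \bvec{p}_3^{(N)} \otimes \bvec{p}_4^{(N)}$ on the combined bipartite system with $X = X_1 X_2 X_3 X_4$ and $Y = Y_1 Y_2 Y_3 Y_4$, and argue that its entropy vector is the sum of the four factor entropy vectors; letting $N \to \infty$ then gives a sequence converging to $(x,y,z)$, placing it in the closure of entropy vectors. Additivity on product states is provided by Lemma \ref{product}, but only when $\hat{H}$ is minimized over basic measurements. To remove this restriction one observes that mixtures of basic measurements cannot strictly lower the Shannon entropy (by concavity), so it suffices to know that every maximally informative measurement on the combined system is basic or a mixture thereof.

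This is where the structural input of Lemma \ref{twononclass} enters. In the four constructions, only $X_3$ (two inputs) and $Y_4$ (two inputs) are non-classical; every other box has a single fiducial measurement. Consequently the marginal $X$ contains exactly one non-classical box ($X_3$), the marginal $Y$ contains exactly one ($Y_4$), and the joint $XY$ contains exactly two. In each case the hypothesis of Lemma \ref{twononclass} is satisfied, so all measurements are basic up to mixtures, and iterating Lemma \ref{product} across the four-fold tensor product yields additivity of $\hat{H}(X)$, $\hat{H}(Y)$, and $\hat{H}(XY)$ on $\bvec{p}^{(N)}$. The main obstacle is precisely this bookkeeping: one must verify that after aggregating all four extremal-ray gadgets into a single product state, the total non-classical-box count on each marginal and on the joint system stays within the two-box regime where Lemma \ref{twononclass} applies — because without that, additivity of $\hat{H}$ over arbitrary maximally informative measurements is not known, and the argument would collapse.
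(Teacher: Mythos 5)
Your proposal is correct and follows essentially the same route as the paper's own proof: decompose the target vector over the extremal rays $e_1,\dots,e_4$, realize each $\lambda_i e_i$ (exactly for $e_1,e_2$, in the limit for $e_3,e_4$) by the explicit two-box states, take the tensor product, and invoke Lemma \ref{twononclass} (only two non-classical boxes in total) together with Lemma \ref{product} to get additivity of the entropy vector. Your extra bookkeeping — the explicit feasibility of the conic decomposition, the concavity argument dismissing mixtures of basic measurements, and the count of non-classical boxes on each marginal — only makes explicit what the paper leaves implicit.
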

\begin{proof}
    Consider an arbitrary vector in $\mathscr{C}$, $v=\lambda_1 e_1 + \lambda_2 e_2 + \lambda_3 e_3 + \lambda_4 e_4$. We have found states $\rho_1,\rho_2,\rho_3,\rho_4$ whose entropy vectors are (arbitrarily close to) $\lambda_1 e_1, \lambda_2 e_2, \lambda_3 e_3, \lambda_4 e_4$ respectively, and $\rho_1,\rho_2$ are entirely classical, whereas $\rho_3,\rho_4$ have only one non-classical box. If we take $\sigma=\rho_1\otimes\rho_2\otimes\rho_3\otimes\rho_4$, then $\sigma$ has only two non-classical boxes. By lemma \ref{twononclass} the only measurements on $\sigma$ are basic measurements, hence lemma \ref{product} tells us that the measurement entropy is additive on product states here. Consequently, the entropy vector of $\sigma$ is (arbitrarily close to) $v$.
\end{proof}

\section{Relation to Non-locality}
In the previous section we gave the proof of the main technical result of the paper, but we have not yet delivered the punch line. The alert reader would have noticed that all the states used in the proof of theorem \ref{maintheorem} are separable GNST states (and hence local in that they admit a local hidden variable description). For example, the state given by \eqref{mainexample0} and \eqref{mainexample1} can be decomposed in the following way:
\begin{equation}
    p(a,b|x)=\frac12 q_1(a|x) r_1(b) + \frac12 q_2(a|x) r_2(b)
\end{equation}
where $r_1(0)=1,r_1(1)=0,r_2(0)=0,r_2(1)=1$, and
\begin{equation}
    q_1(a|x)=\left\{\begin{array}{cl}
        1 & \text{if } a=x=0\\
        \frac{1}{N} & \text{if } a\geq1,x=1\\
        0 & \text{else}
    \end{array}\right.
\end{equation}
\begin{equation}
    q_2(a|x)=\left\{\begin{array}{cl}
        1 & \text{if } a=0,x=1\\
        \frac{1}{N} & \text{if } a\geq1,x=0\\
        0 & \text{else}
    \end{array}\right.
\end{equation}
Consequently, the theorem could just as easily have read:
\begin{thm'} \label{maintheorem'}
    Every bipartite GNST entropy vector is in the closure of the set of entropy vectors of separable GNST states.
\end{thm'}

Suppose that we are given a bipartite GNST state and told its entropy vector, which is known to be accurate to within some $\epsilon$, which can be arbitrarily small. Then the theorem tells us that we gain no knowledge of the non-local properties of the state. Whatever the entropy vector, the state may or may not be separable. This is in stark contrast with the von Neumann entropy, for which any vector with $S(AB)<S(A)$ instantly reveals that the state is entangled.

A further word of clarification. We showed that the measurement entropy in GNST satisfies only subadditivity and non-negativity and (at least for 2 parties) no further inequalities. (In fact we have evidence leading us to conjecture that this could also be true for 3 parties). This shows that entropy vectors in GNST are more general than entropy vectors in quantum theory. It would have been tempting to conclude that the reason for this is the extra non-locality available in GNST, but we have seen that this is not the case. What, then, is the cause?

Consider the following implementation of the state \eqref{mainexample0}-\eqref{mainexample1} using classical random variables.

\begin{center}\includegraphics[height=2in]{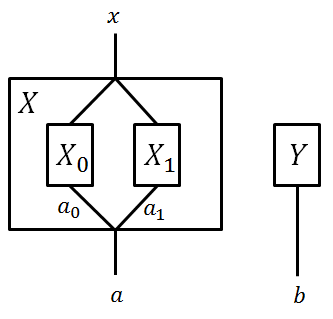}\end{center}

In this system there are three classical random variables $X_0,X_1$ and $Y$. $X_0$ and $X_1$ are concealed within box $X$ and are arranged such that:
\begin{itemize}
    \item If Alice inputs $x=0$, then $a=a_0$ and $X_1$ is destroyed.
    \item If Alice inputs $x=1$, then $a=a_1$ and $X_0$ is destroyed.
\end{itemize}
The distribution of $X_0,X_1,Y$ is as follows:
\begin{equation} \label{classdist}
    p(a_0,a_1,b)=\left\{\begin{array}{cl}
        \frac{1}{2N} & \text{if } a_0=0,a_1\geq1,b=0 \text{ or } a_0\geq1,a_1=0,b=1 \\
        0 & \text{else}
    \end{array}\right.
\end{equation}
which gives the same distribution as \eqref{mainexample0}-\eqref{mainexample1} for $p(a,b|x)$.

This raises an obvious question: if this GNST state can be realized via classical probability theory, why can't its entropies also be obtained there? The Shannon entropy vector of \eqref{classdist}, considered as a bipartite state, is $(1+ \log N, 1, 1+\log N)$ compared with the GNST entropy vector $(1+\frac12 \log N, 1, 1)$. But the Shannon entropy is none other than the measurement entropy in the classical setting. Since the same measurement that achieved $\hat{H}(XY)=1$ in GNST can also be performed classically, surely also $H(XY) \leq1$?

The reason this is not true is that, although this measurement can be performed in the classical setting, it is not \emph{maximally informative} there; since classically a maximally informative measurement must give the outputs of both $X_0$ and $X_1$. This mechanism by which GNST artificially makes measurements on classical random variables which are not maximally informative to be so, by hiding some of the variables within the boxes, is the reason that GNST entropy vectors are more general than classical ones.

\section{Discussion}
When faced with the task of naming his entropy, Shannon was apparently told by von Neumann to call it `entropy' because ``nobody knows what entropy really is, so in a debate you will always have the advantage''. Sixty years on this is still true. If `entropy' means `the minimum amount of uncertainty of a system under a maximally informative measurement' (and we accept the Shannon entropy as synonymous with `uncertainty') then we are forced to accept the measurement entropy as the unique entropy in any physical theory. However, this is not the prevailing definition. If we take a more pragmatic approach, and allow any function to be deemed an `entropy' if it satisfies a certain set of properties, then the question becomes: which properties do we choose?

This is where the result of the previous section fits. If you consider links between the von Neumann entropy and non-locality to be a happy coincidence, then this result has no bearing on the measurement entropy. If, however, you consider that in a highly non-local theory, such as GNST, an `entropy' ought to reflect this non-locality, then the measurement entropy cannot really be an `entropy'.

It would be interesting to explore the existence of a `better' entropy than measurement entropy in GNST. We know from \cite{AS11} that there is no function which obeys all the same desirable properties that the von Neumann entropy does in the quantum regime. However, could there be a function with the same desirable properties as the measurement entropy that also detects non-locality? Another interesting problem would be to determine whether or not the analogue of Theorem 7$'$ holds in other GPTs which are between quantum theory and GNST.

Ultimately, the Shannon and von Neumann entropies are useful functions, not because of the desirable properties they have, but because of their impact on physics and information theory. Their importance lies in the fact they can be used to give expressions for optimal rates of compression, or for channel capacities. To the best of our knowledge, only one such theorem is known using the measurement entropy \cite{SW10}. Other such theorems would be the best way to prove the usefulness of `entropy' measures.

\section*{Acknowledgments}
We thank Andreas Winter for very helpful discussions throughout the work, and especially Miguel Navascues who drew our attention to the separability of the states in section 4. The authors acknowledge support by the U.K.~EPSRC and the European Commission (STREP project ``QCS'').

\bibliographystyle{unsrt}
\bibliography{References}

\appendix
\section{} \label{appendix}

In this appendix we prove two of the lemmas stated in section \ref{GNST}.

\begin{lem2}
    Let $A_1,A_2,B_1,\ldots,B_n$ be a system of boxes, where only $A_1$ and $A_2$ are not classical. Then all measurements on this system are basic, or mixtures of basic measurements.
\end{lem2}
\begin{proof}
    Let $\bvec{a}=(a_1,a_2)$ denote the outputs of the 2 non-classical boxes, and $\bvec{x}=(x_1,x_2)$ denote their inputs. Let $\bvec{b}=(b_1,\ldots,b_n)$ denote the outputs of the classical boxes (since these boxes have no choice of input, we omit their inputs from the notation). Let $M = \left\{ (r,\mu_r) \right\}$ be an arbitrary measurement on the system, with $\left\{ \bvec{R}_r \right\}$ a set of representing vectors given by Lemma \ref{effect}.

    For each $\bvec{b}$ let $\bvec{R}_r^{(\bvec{b})}$ be the vector with components $R_r^{(\bvec{b})}(\bvec{a}|\bvec{x}) = R_r(\bvec{a},\bvec{b}|\bvec{x})$. Now, for fixed $\bvec{\hat{b}}$, we claim that $\{ \bvec{R}_r^{(\bvec{\hat{b}})} \}$ represent a measurement, $M^{(\bvec{\hat{b}})}$ on the non-classical part of the system. To see this, note that whenever $p(\bvec{a}|\bvec{x})$ is a state on $A_1,A_2$, we have:
    \begin{equation}
        \sum_{r,\bvec{a},\bvec{x}} p(\bvec{a}|\bvec{x}) R_r^{(\bvec{\hat{b}})}(\bvec{a}|\bvec{x}) = \sum_{r,\bvec{a},\bvec{b},\bvec{x}} p(\bvec{a}|\bvec{x}) \delta_{\bvec{b}\bvec{\hat{b}}} R_r(\bvec{a},\bvec{b}|\bvec{x}) = 1
    \end{equation}
    where the last equality follows from the fact that $p(\bvec{a}|\bvec{x}) \delta_{\bvec{b}\bvec{\hat{b}}}$ is an allowed state of the overall system.

    Since $M^{(\bvec{\hat{b}})}$ is a measurement on a two box system, it must be a mixture of basic measurements \cite{SB10}. The following is, therefore, also a mixture of basic measurements on $A_1,A_2,B_1,\ldots,B_n$.
    \begin{enumerate}[(i)]
        \item Obtain outputs $\bvec{b}$.
        \item Perform measurement $M^{(\bvec{b})}$, obtaining outcome $r$.
        \item Declare $r$ to be the outcome of the overall measurement.
    \end{enumerate}
    But, in fact, this measurement is $M$, since for this measurement:
    \begin{align}
        \text{Prob} (r) &= \sum_{\bvec{b}} p(\bvec{b}) \text{Prob}(r|\bvec{b})\\
            &= \sum_{\bvec{b}} p(\bvec{b}) \sum_{\bvec{a},\bvec{x}} R_r^{(\bvec{b})}(\bvec{a}|\bvec{x}) p(\bvec{a}|\bvec{b},\bvec{x})\\
            &= \sum_{\bvec{b}} p(\bvec{b}) \sum_{\bvec{a},\bvec{x}} R_r^{(\bvec{b})}(\bvec{a}|\bvec{x}) \frac{p(\bvec{a},\bvec{b}|\bvec{x})}{p(\bvec{b})}\\
            &= \sum_{\bvec{a},\bvec{b},\bvec{x}} R_r(\bvec{a},\bvec{b}|\bvec{x})p(\bvec{a},\bvec{b}|\bvec{x})
    \end{align}
    which is the same as the probability of getting outcome $r$ in measurement $M$. Here, $p(\bvec{a}|\bvec{b},\bvec{x})$ is the probability of getting output $\bvec{a}$ from systems $A_1,A_2$ when we input $\bvec{x}$ given knowledge of the outputs $\bvec{b}$ from the classical boxes. This is equal to $p(\bvec{b})^{-1}p(\bvec{a},\bvec{b}|\bvec{x})$ by the rules of conditional probability and the no-signalling condition.
\end{proof}

\begin{lem3}
    A measurement $M=\{(r,\mu_r)\}$ is maximally informative if and only if every effect $\mu_r$ can be represented by a vector with only one non-zero entry, which is between 0 and 1.
\end{lem3}

In order to make the proof of this lemma more clear, we first introduce two simple lemmas.

\begin{lemma} \label{useful}
    Let $(\bvec{a_1},\bvec{x_1})$ and $(\bvec{a_2},\bvec{x_2})$ be output-input pairs of a GNST system, with not both $\bvec{a_1}=\bvec{a_2}$ and $\bvec{x_1}=\bvec{x_2}$. Then there exists an allowed state, $\bvec{p}$, such that $p(\bvec{a_1}|\bvec{x_1})=0$ and $p(\bvec{a_2}|\bvec{x_2})>0$.
\end{lemma}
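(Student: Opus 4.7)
The plan is to construct a simple local-deterministic product state that distinguishes the two entries. For each box $i$ in the system, pick a function $f_i$ from its input alphabet to its output alphabet, and let $p_i(a_i|x_i) := \delta_{a_i, f_i(x_i)}$. The product distribution
\begin{equation*}
    p(\bvec{a}|\bvec{x}) := \prod_i p_i(a_i|x_i)
\end{equation*}
is normalized, and its marginal over any $a_j$ collapses the $j$-th delta to $1$ and is therefore independent of $x_j$, so no-signalling holds and $p$ is an allowed GNST state.

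For such a state, $p(\bvec{a_2}|\bvec{x_2}) = 1$ iff $f_i((x_2)_i) = (a_2)_i$ for every $i$, and $p(\bvec{a_1}|\bvec{x_1}) = 0$ iff there is some coordinate $i_\star$ with $f_{i_\star}((x_1)_{i_\star}) \neq (a_1)_{i_\star}$. First I would set $f_i((x_2)_i) := (a_2)_i$ for every $i$, securing the positivity condition. Then I would select $i_\star$ by cases. If $\bvec{x_1} = \bvec{x_2}$, the hypothesis forces $\bvec{a_1} \neq \bvec{a_2}$, so pick any coordinate $i_\star$ with $(a_1)_{i_\star} \neq (a_2)_{i_\star}$; the already-fixed value $f_{i_\star}((x_1)_{i_\star}) = f_{i_\star}((x_2)_{i_\star}) = (a_2)_{i_\star}$ then automatically differs from $(a_1)_{i_\star}$. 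If instead $\bvec{x_1} \neq \bvec{x_2}$, pick $i_\star$ with $(x_1)_{i_\star} \neq (x_2)_{i_\star}$; the value $f_{i_\star}((x_1)_{i_\star})$ is still free, so I assign it any output different from $(a_1)_{i_\star}$, using that every fiducial measurement has at least two outcomes. Extend each $f_i$ arbitrarily on the remaining inputs.

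The only real obstacle is the case split itself: checking that whenever $(\bvec{a_1},\bvec{x_1}) \neq (\bvec{a_2},\bvec{x_2})$, some coordinate gives us ``room'' to spoil the value at $(\bvec{a_1},\bvec{x_1})$ without disturbing the prescribed value at $(\bvec{a_2},\bvec{x_2})$. Once the case analysis above is noted, the rest is immediate, since the product-of-deltas form decouples the requirements across the boxes.
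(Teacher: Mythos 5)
Your construction is correct and is essentially the paper's own argument: the paper likewise exhibits explicit local-deterministic (product-of-deltas) states, merely organizing the case split by whether $\bvec{a_1}=\bvec{a_2}$ rather than by whether $\bvec{x_1}=\bvec{x_2}$. Your explicit appeal to every box having at least two outputs is consistent with the paper, which makes the same assumption implicitly and acknowledges it in the appendix remark (the lemma genuinely fails for single-output boxes).
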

\begin{proof}
    First suppose that $\bvec{a_1}\neq\bvec{a_2}$. Then we can choose $\bvec{p}$ to be the distribution:
    \begin{equation}
        p(\bvec{a}|\bvec{x})=\delta_{\bvec{a}\bvec{a_2}}
    \end{equation}
    where $\delta_{\bvec{x}\bvec{y}}$ is 1 if $\bvec{x}=\bvec{y}$ and $0$ otherwise.

    Now suppose that $\bvec{a_1}=\bvec{a_2}$. This means that we must have $\bvec{x_1}\neq\bvec{x_2}$. Suppose that $\bvec{x_1}$ and $\bvec{x_2}$ disagree in the $i$th entry. Let $\bvec{\bar{a}_1}$ be a vector of outputs which disagrees with $\bvec{a_1}$ only in the $i$th entry. We can then choose $\bvec{p}$ to be the distribution:
    \begin{equation}
        p(\bvec{a}|\bvec{x})=\left\{\begin{array}{cl}
            \delta_{\bvec{a}\bvec{\bar{a}_1}} & \text{ if $\bvec{x},\bvec{x_1}$ agree in $i$th entry} \\
            \delta_{\bvec{a}\bvec{a_1}} & \text{ else}
        \end{array}\right.
    \end{equation}
\end{proof}

\begin{lemma} \label{RequalsS}
    Suppose that $\bvec{R}$ and $\bvec{S}$ are vectors representing the effect $\mu$, such that $\bvec{R}$ has exactly one non-zero entry, and $\bvec{S}$ has no negative entries. Then $\bvec{R}=\bvec{S}$.
\end{lemma}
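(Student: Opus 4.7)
The plan is to use two kinds of test states: one family that forces every entry of $\bvec{S}$ outside the support of $\bvec{R}$ to vanish, and then a single deterministic state that fixes the value of the remaining entry. Let $(\bvec{a}_0,\bvec{x}_0)$ be the position of the unique non-zero entry of $\bvec{R}$, and write $c:=R(\bvec{a}_0|\bvec{x}_0)$.

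First I would show that $S(\bvec{a}_1|\bvec{x}_1)=0$ for every pair $(\bvec{a}_1,\bvec{x}_1)\neq(\bvec{a}_0,\bvec{x}_0)$. Applying Lemma \ref{useful}, choose an allowed state $\bvec{p}$ with $p(\bvec{a}_0|\bvec{x}_0)=0$ and $p(\bvec{a}_1|\bvec{x}_1)>0$. Since $\bvec{R}$ and $\bvec{S}$ both represent $\mu$,
\begin{equation}
  \sum_{\bvec{a},\bvec{x}} p(\bvec{a}|\bvec{x})\, R(\bvec{a}|\bvec{x}) \;=\; \mu(\bvec{p}) \;=\; \sum_{\bvec{a},\bvec{x}} p(\bvec{a}|\bvec{x})\, S(\bvec{a}|\bvec{x}).
\end{equation}
The left-hand side collapses to $c\,p(\bvec{a}_0|\bvec{x}_0)=0$ because $\bvec{R}$ has a single non-zero entry. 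The right-hand side is a sum of non-negative terms (probabilities times non-negative entries of $\bvec{S}$), so it must vanish term by term; since $p(\bvec{a}_1|\bvec{x}_1)>0$, this forces $S(\bvec{a}_1|\bvec{x}_1)=0$.

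Second, I would pin down $S(\bvec{a}_0|\bvec{x}_0)$ using the deterministic allowed state $\bvec{p}$ defined by $p(\bvec{a}|\bvec{x})=\delta_{\bvec{a}\bvec{a}_0}$ (manifestly non-signalling). Computing $\mu(\bvec{p})$ via $\bvec{R}$ yields $c$, while computing via $\bvec{S}$ yields $\sum_{\bvec{x}} S(\bvec{a}_0|\bvec{x})=S(\bvec{a}_0|\bvec{x}_0)$, since the previous step has killed all other summands. Hence $S(\bvec{a}_0|\bvec{x}_0)=c$ and $\bvec{R}=\bvec{S}$. Nothing in this argument is delicate; the only point worth flagging is that the non-negativity hypothesis on $\bvec{S}$ is genuinely necessary to pass from \emph{sum equals zero} to \emph{each summand equals zero}. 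Without it, different representatives of the same effect clearly coexist, so the conclusion really does rely on this sign condition together with the separating power of Lemma \ref{useful}.
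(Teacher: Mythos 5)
Your proof is correct and follows essentially the same route as the paper's: both use Lemma \ref{useful} to produce a state vanishing at the support of $\bvec{R}$ but positive at an arbitrary other entry (with the non-negativity of $\bvec{S}$ forcing that entry of $\bvec{S}$ to vanish), and then the deterministic state $p(\bvec{a}|\bvec{x})=\delta_{\bvec{a}\bvec{a}_0}$ to fix the remaining entry. The only difference is cosmetic: the paper phrases the argument in terms of the difference vector $\bvec{d}=\bvec{R}-\bvec{S}$, while you work directly with the entries of $\bvec{S}$.
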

\begin{proof}
    Let $\bvec{d}=\bvec{R}-\bvec{S}$. Since $\bvec{R},\bvec{S}$ both represent the same effect, it must be the case that $\bvec{d}\cdot\bvec{p}=0$ for all states $\bvec{p}$. We aim to show that $\bvec{d}=\bvec{0}$.

    Let $(\bvec{a_1},\bvec{x_1})$ be such that $R(\bvec{a_1}|\bvec{x_1})>0$. Let $(\bvec{a_2},\bvec{x_2})$ be a distinct, but otherwise arbitrary, output-input pair. Note that $d(\bvec{a_2}|\bvec{x_2})\leq0$. Now choose $\bvec{p}$ according to the previous lemma, and notice that for this choice of $\bvec{p}$, $\bvec{d}\cdot\bvec{p}$ will be negative, unless $d(\bvec{a_2}|\bvec{x_2})=0$. But $\bvec{a_2},\bvec{x_2}$ were arbitrary, so in fact $d(\bvec{a_1}|\bvec{x_1})$ is the only possibly non-zero component of $\bvec{d}$. Finally, let $\bvec{p}$ be the distribution $p(\bvec{a}|\bvec{x})=\delta_{\bvec{a}\bvec{a_1}}$ and then $\bvec{d}\cdot\bvec{p}=0$ implies that, in fact, $\bvec{d}=\bvec{0}$.
\end{proof}

\begin{proof}[Proof of lemma 3]
    Suppose that $\mu$ is an effect which can be represented by a vector, $\bvec{R}$, with only one non-zero entry. Suppose also that $\mu=\sum_i \nu_i$ for some effects $\nu_i$. For any vectors $\bvec{S}_i$ which represent $\nu_i$, the vector $\sum_i \bvec{S}_i$ represents $\mu$. By lemma \ref{effect} we can choose the $\bvec{S}_i$ so that they have no negative entries. Then, by lemma \ref{RequalsS}, this means that $\bvec{R}=\sum_i \bvec{S}_i$. This implies that $\bvec{S}_i \propto \bvec{R}$, and hence $\nu_i \propto \mu$ for all $i$. Thus, if all the effects of a measurement can be represented in this way, then the measurement must be maximally informative.

    Conversely, suppose $\mu$ cannot be represented by such a vector. Let $\bvec{R}$ be a vector with entries between 0 and 1 which represents $\mu$. $\bvec{R}$ must have more than one non-zero entry. Let $\bvec{R}_1$ be the vector which shares $\bvec{R}$'s first non-zero entry, and has zeroes elsewhere, and let $\bvec{R}_2=\bvec{R}-\bvec{R}_1$. Then $\bvec{R}_1$ and $\bvec{R}_2$ both represent valid effects $\nu_1,\nu_2$ with $\nu_1 + \nu_2 = \mu$. Now, if $\nu_1 \propto \mu$ then for some constant $\lambda$ we have $\lambda \nu_1 = \mu$ and hence $\lambda \bvec{R}_1$ represents $\mu$. But by lemma \ref{RequalsS} this implies that $\lambda \bvec{R}_1 = \bvec{R}$, which is clearly false. Consequently, there must exist a non-trivial refinement of any measurement containing $\mu$.
\end{proof}

\begin{rmk}
    In the proof of lemma \ref{useful} (and hence also in lemmas \ref{fine-grained} and \ref{RequalsS}) we assumed for simplicity that each box has more than one possible output. In the (rather trivial) case that some boxes have only one output, lemmas \ref{useful} and \ref{RequalsS} do not hold. However, it is still possible to obtain lemma \ref{fine-grained} by similar reasoning.

    The key observation is the following. Suppose that we have a system of boxes, some of which have fixed outputs. We denote these boxes by $X$, their inputs $\bvec{x}$ and their outputs $\bvec{0}$. The remainder of the system, $Y$, has boxes with inputs $\bvec{y}$ and outputs $\bvec{b}$. We now show that we can reduce the theory to one on system $Y$ only. The no signalling constraints ensure that for any state $\bvec{p}$ of the system, for all $\bvec{x},\bvec{x'}$, $p(\bvec{0},\bvec{b}|\bvec{x},\bvec{y})=p(\bvec{0},\bvec{b}|\bvec{x'},\bvec{y})$. This implies that a vector $\bvec{R}$ represents an effect $\mu$ if and only if the vector $\bvec{R'}$ also represents $\mu$, where
    \begin{equation}
        R'(\bvec{0},\bvec{b}|\bvec{x},\bvec{y})=\left\{\begin{array}{cl}
            \sum_\bvec{x'} R(\bvec{0},\bvec{b}|\bvec{x'},\bvec{y}) & \text{ if } \bvec{x}=\bvec{0}\\
            0 & \text{ else}
        \end{array}\right.
    \end{equation}
    Thus the effect is essentially an effect on system $Y$: $R''(\bvec{b}|\bvec{y}):=R'(\bvec{0},\bvec{b}|\bvec{0},\bvec{y})$. We can now run the proofs of lemmas \ref{fine-grained},\ref{useful} and \ref{RequalsS} for this effect.
\end{rmk}

\end{document}